\definecolor{gray}{rgb}{0.93,0.93,0.93}
\definecolor{light-gold}{rgb}{0.99,0.97,0.78}
\def\be{\begin{equation}}
\def\ee{\end{equation}}
\def\bm{\begin{multline}}
\def\bfig{\begin{figure}[htb]}
\def\efig{\end{figure}}
\newcommand{\dd}{{\rm d}}
\newcommand{\e}[1]{\,{\rm e}^{#1}\,}
\newcommand{\ii}{{\rm i}}
\def\Tr{{\operatorname{Tr\,}}}
\def\tr{{\operatorname{tr\,}}}
\numberwithin{equation}{section}
\newtheorem{theorem}{Theorem}
\newcommand{\caE}{{\mathcal E}}
\newcommand{\caH}{{\mathcal H}}
\newcommand{\caL}{{\mathcal L}}
\newcommand{\bbC}{{\mathbb C}}
\newcommand{\bbE}{{\mathbb E}}
\newcommand{\bbN}{{\mathbb N}}
\newcommand{\bbP}{{\mathbb P}}
\newcommand{\bsh}{{\boldsymbol h}}
\newcommand{\Id}{\mathrm{\texttt{Id}}}
  \def\tagform@#1{\maketag@@@{\tiny{(#1)}\@@italiccorr}}
\renewcommand{\eqref}[1]{(\ref{#1})}
\begin{document}


\title{Quantum spin correlations and random loops}

\author{Daniel Ueltschi}
\address{Department of Mathematics, University of Warwick,
Coventry, CV4 7AL, United Kingdom}
\email{daniel@ueltschi.org}

\subjclass{60K35, 82B10, 82B20, 82B26, 82B31}


\begin{abstract}
We review the random loop representations of T\'oth and Aizenman-Nachtergaele for quantum Heisenberg models. They can be combined and extended so as to include the quantum $XY$ model and certain SU(2)-invariant spin 1 systems. We explain the calculations of correlation functions.

\end{abstract}

\thanks{Work partially supported by EPSRC grant EP/G056390/1.}
\thanks{\copyright{} 2014 by the author. This paper may be reproduced, in its
entirety, for non-commercial purposes.}

\maketitle

\section{Introduction}
\label{sec intro}

Random loop approaches to quantum spin systems offer an elegant and different perspective to quantum correlations. They find their origin in Feynman-Kac representations of quantum lattice systems. Motivated by earlier work of Conlon and Solovej \cite{CS}, T\'oth introduced a representation of the $S=\frac12$ ferromagnetic Heisenberg model that is based on the random interchange model \cite{Toth}. It allowed him to propose a bound for the free energy (it has been improved recently by Correggi, Giuliani, and Seiringer \cite{CGS}, who have reached the best possible constant). A similar representation was introduced by Aizenman and Nachtergaele for the $S=\frac12$ antiferromagnet model \cite{AN} and certain models with higher spins. It allowed them to relate the one-dimensional quantum chain to two-dimensional Potts and random cluster models, yielding new insights on the quantum spin chain. This work was reviewed and extended in \cite{Nac1,Nac2}. See also \cite{GUW} for a pedagogical introduction. Recently, Bachmann and Nachtergaele used the representation in their study of the classification of gapped ground states \cite{BN}.

A synthesis of these two representations was proposed in \cite{Uel}. In the case $S=\frac12$, it applies to models that interpolate between the Heisenberg ferromagnetic and antiferromagnetic models such as the quantum $XY$ model. It also applies to certain SU(2)-invariant models of spin 1. Thanks to this representation, the existence of spin nematic long-range order was established in the model with $S=1$ in dimension $d\geq3$ \cite{Uel}. It also plays a r\^ole in the recent proof of emptyness formation of quantum spin chains of Crawford, Ng, and Starr \cite{CNS}.

This article reviews some of the material treated in \cite{Uel}, and also complements it. We consider the case of an external magnetic field, possibly disordered. We detail formul\ae{} for the matrix elements of the operator $\e{-\beta H}$ and use them to calculate correlation functions. Since there are few loop correlation functions, and seemingly more quantum spin correlation functions, the relations provide useful identities; these identities do not seem otherwise immediate.

\section{Quantum spin models}

Let $(\Lambda,\caE)$ be a graph, with $\Lambda$ the (finite) set of vertices and $\caE$ the set of edges. Given $S \in \frac12 \bbN$, the Hilbert space is
\be
\caH_{\Lambda} = \bigotimes_{x\in\Lambda} \caH_{x},
\ee
where each $\caH_{x}$ is a copy of $\bbC^{2S+1}$. The spin operators are denoted $S_{x}^{i}$, with $x\in\Lambda$ and $i=1,2,3$. They satisfy the commutation relations $[S_{x}^{1},S_{y}^{2}] = \ii \delta_{x,y} S_{x}^{3}$, and further relations obtained by cyclic permutations of the indices $1,2,3$. Recall that ``classical configurations'' $\sigma \in \{-S, \dots, S-1, S\}^{\Lambda}$ form a basis of $\caH_{\Lambda}$ where the operators $S_{x}^{3}$ are diagonal: Using Dirac's notation,
\be
S_{x}^{3} |\sigma\rangle = \sigma_{x} |\sigma\rangle.
\ee

We consider the three operators $T_{xy}$, $P_{xy}$, $Q_{xy}$ on $\caH_{\{x,y\}}$ (and their extensions on $\caH_{\Lambda}$ by identifying $T_{xy}$ with $T_{xy} \otimes \Id_{\Lambda\setminus\{x,y\}}$, etc...): 
\begin{itemize}
\item $T_{xy}$ is the transposition operator:
\be
T_{xy} |a,b\rangle = |b,a\rangle.
\ee
\item $P_{xy}$ is the operator
\be
P_{xy} = \sum_{a,b=-S}^{S} (-1)^{a-b} |a, -a \rangle \langle b, -b|.
\ee
Equivalently, the matrix coefficients of $P_{xy}$ are given by
\be
\langle a,b| P_{xy} |c,d\rangle = (-1)^{a-c} \delta_{a,-b} \delta_{c,-d}.
\ee
Notice that $\frac1{2S+1} P_{xy}$ is the projector onto the spin singlet.
\item $Q_{xy}$ is identical to $P_{xy}$ except for the signs:
\be
\langle a,b| Q_{xy} |c,d\rangle = \delta_{a,b} \delta_{c,d}.
\ee
\end{itemize}
These operators can be written in terms of spin operators. The form depends on the spin.
In the case $S=\frac12$, we have
\be
\begin{split}
&T_{xy} = 2 \bigl( S_{x}^{1} S_{y}^{1} + S_{x}^{2} S_{y}^{2} + S_{x}^{3} S_{y}^{3} \bigr) + \tfrac12, \\
&Q_{xy} = 2 \bigl( S_{x}^{1} S_{y}^{1} - S_{x}^{2} S_{y}^{2} + S_{x}^{3} S_{y}^{3} \bigr) + \tfrac12.
\end{split}
\ee
In the case $S=1$, we have
\be
\begin{split}
&T_{xy} = \vec S_{x} \cdot \vec S_{y} + (\vec S_{x} \cdot \vec S_{y})^{2} - 1, \\
&P_{xy} = (\vec S_{x} \cdot \vec S_{y})^{2} - 1.
\end{split}
\ee
Here, we used the notation $\vec S_{x} \cdot \vec S_{y} = S_{x}^{1} S_{y}^{1} + S_{x}^{2} S_{y}^{2} + S_{x}^{3} S_{y}^{3}$.

Let $\bsh = (h_{x})_{x\in\Lambda}$ denote external magnetic fields. We consider two distinct families of Hamiltonians, indexed by the parameter $u \in [0,1]$:
\begin{align}
&H_{\Lambda,\bsh}^{(u)} = -\sum_{\{x,y\} \in \caE} \Bigl( u T_{xy} + (1-u) Q_{xy} - 1 \Bigr) - \sum_{x\in\Lambda} h_{x} S_{x}^{3}, \label{def fam1} \\
&\tilde H_{\Lambda,\bsh}^{(u)} = -\sum_{\{x,y\} \in \caE} \Bigl( u T_{xy} + (1-u) P_{xy} - 1 \Bigr) - \sum_{x\in\Lambda} h_{x} S_{x}^{3}. \label{def fam2}
\end{align}

Let $Z^{(u)}(\beta,\Lambda,\bsh)$ and $\tilde Z^{(u)}(\beta,\Lambda,\bsh)$ denote the corresponding partition functions:
\begin{align}
&Z^{(u)}(\beta,\Lambda,\bsh) = \Tr_{\caH_{\Lambda}} \e{-\beta H_{\Lambda,\bsh}^{(u)}}, \\
&\tilde Z^{(u)}(\beta,\Lambda,\bsh) = \Tr_{\caH_{\Lambda}} \e{-\beta \tilde H_{\Lambda,\bsh}^{(u)}}.
\end{align}

The Hamiltonians of Eqs \eqref{def fam1} and \eqref{def fam2} can also be expressed in terms of spin operators. In the case $S=\frac12$, we have
\be
H_{\Lambda,\bsh}^{(u)} = -2 \sum_{\{x,y\}\in\caE} \bigl( S_{x}^{1} S_{y}^{1} + (2u-1) S_{x}^{2} S_{y}^{2} + S_{x}^{3} S_{y}^{3} - \tfrac14 \bigr) - \sum_{x\in\Lambda} h_{x} S_{x}^{3}.
\ee
The case $u=1$ is the Heisenberg ferromagnet. The case $u=\frac12$ is the quantum XY model. If the graph is bipartite, the case $u=0$ is unitarily equivalent to the Heisenberg antiferromagnet.

In the case $S=1$, we have
\be
\label{def Ham spin 1}
\tilde H_{\Lambda,\bsh}^{(u)} = -\sum_{\{x,y\}\in\caE} \Bigl( u \vec S_{x} \cdot \vec S_{y} + (\vec S_{x} \cdot \vec S_{y})^{2} - 2 \Bigr) - \sum_{x\in\Lambda} h_{x} S_{x}^{3}.
\ee
It is well-known that any two-body SU(2)-invariant interaction for $S=1$ can be be written as $J_{1} \vec S_{x} \cdot \vec S_{y} + J_{2} (\vec S_{x} \cdot \vec S_{y})^{2}$. The phase diagram of this model is very interesting and it has been investigated by several authors \cite{BO,TZX,TLMP,FKK}. It is displayed in Fig.\ \ref{fig phd}. The line $J_{2}=0$ corresponds to the usual Heisenberg models.

\bfig
\includegraphics[width=9cm]{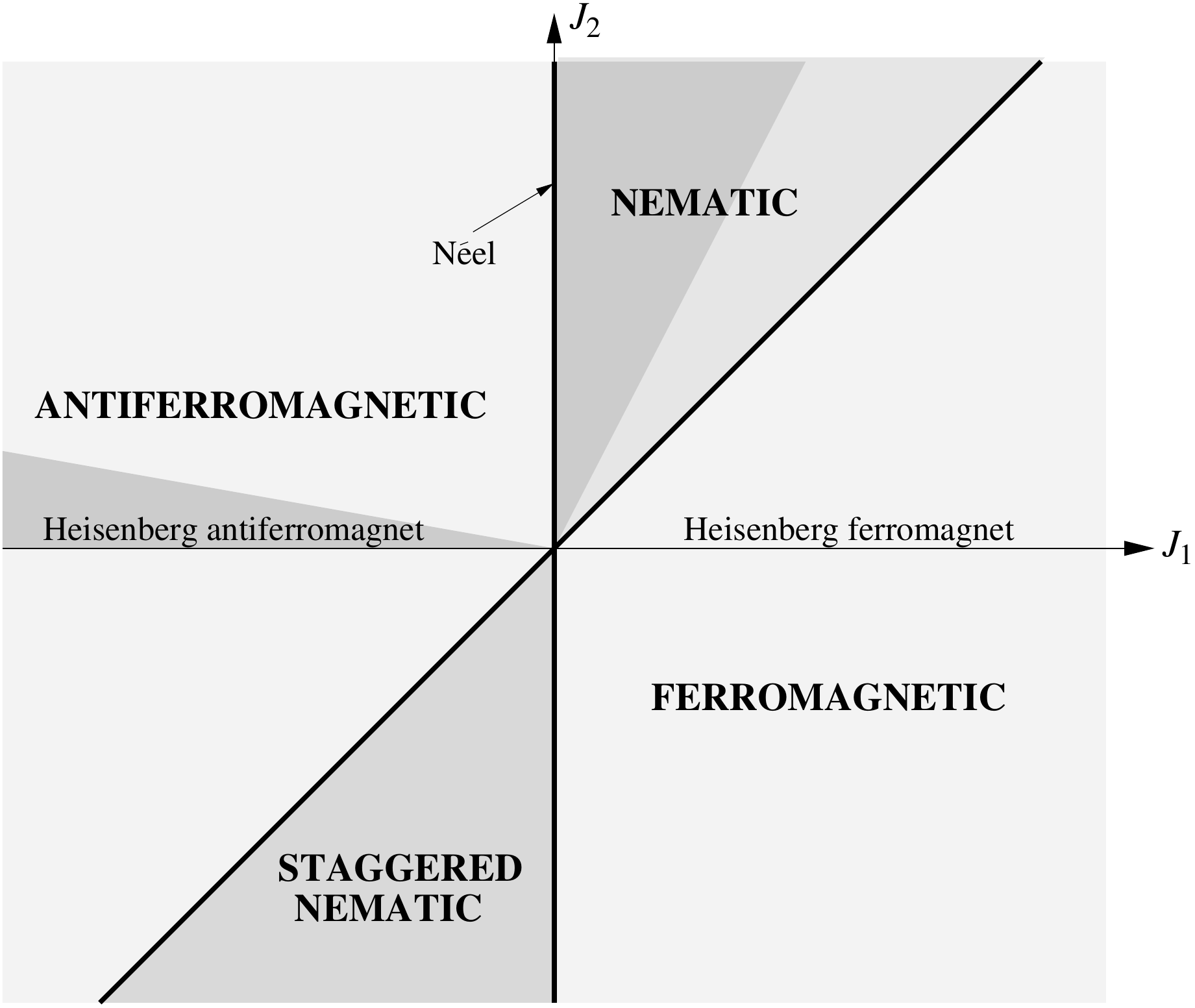}
\caption{Phase diagram of the general spin 1 model with Hamiltonian $H = -\sum [J_{1} \vec S_{x} \cdot \vec S_{y} + J_{2} (\vec S_{x} \cdot \vec S_{y})^{2}]$, in dimension $d\geq3$. The random loop representation applies to the half-quadrant $0 \leq J_{1} \leq J_{2}$. The phase diagram is expected to show four phases (ferromagnetic, spin nematic, antiferromagnetic, staggered spin nematic). This is supported by rigorous results in the dark region around $J_{1}<0$ and small $J_{2}>0$ \cite{DLS,KLS1}, and in the dark region $0 \leq J_{1} \leq \frac12 J_{2}$ \cite{Uel}.}
\label{fig phd}
\efig

\section{Random loop models}

Let us first describe the models of random loops. The connection to quantum spin systems will be described in the next two sections.

At each edge $\{x,y\} \in \caE$ is attached the interval $[0,\beta]$ and a Poisson point measure where ``crosses'' occur with intensity $u$ and ``double bars'' occur with intensity $1-u$. Let $\omega$ denote a realization and $\rho(\dd\omega)$ denote independent Poisson point measures on $\caE \times [0,\beta]$.

To a given realization $\omega$ of the Poisson point measure corresponds a set of loops, denoted $\caL(\omega)$. The loops consist of vertical lines connected by crosses or bars. This is best understood by looking at pictures, see Fig.\ \ref{fig loops}. A mathematically precise definition can be found in \cite{Uel}.

\begin{centering}
\bfig
\begin{picture}(0,0)%
\includegraphics{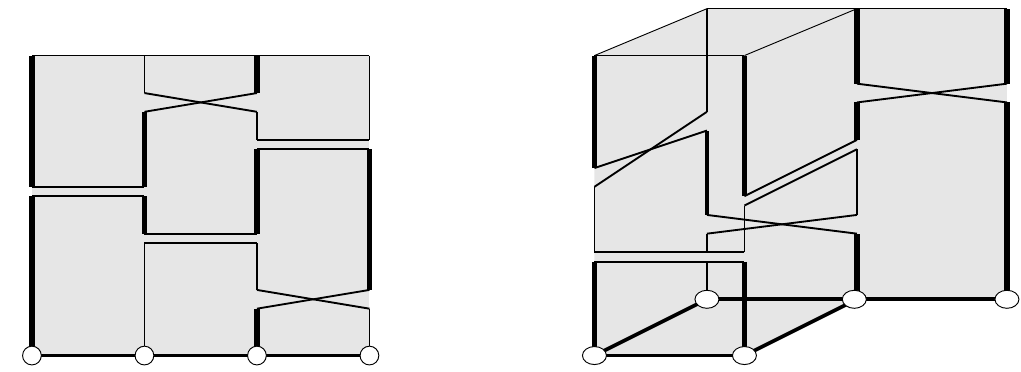}%
\end{picture}%
\setlength{\unitlength}{2368sp}%
\begingroup\makeatletter\ifx\SetFigFont\undefined%
\gdef\SetFigFont#1#2#3#4#5{%
  \reset@font\fontsize{#1}{#2pt}%
  \fontfamily{#3}\fontseries{#4}\fontshape{#5}%
  \selectfont}%
\fi\endgroup%
\begin{picture}(8159,2937)(1246,-4090)
\put(5731,-4036){\makebox(0,0)[lb]{\smash{{\SetFigFont{8}{9.6}{\rmdefault}{\mddefault}{\updefault}{\color[rgb]{0,0,0}$0$}%
}}}}
\put(7876,-3961){\makebox(0,0)[lb]{\smash{{\SetFigFont{8}{9.6}{\rmdefault}{\mddefault}{\updefault}{\color[rgb]{0,0,0}$\Lambda$}%
}}}}
\put(4351,-4036){\makebox(0,0)[lb]{\smash{{\SetFigFont{8}{9.6}{\rmdefault}{\mddefault}{\updefault}{\color[rgb]{0,0,0}$\Lambda$}%
}}}}
\put(1276,-1636){\makebox(0,0)[lb]{\smash{{\SetFigFont{8}{9.6}{\rmdefault}{\mddefault}{\updefault}{\color[rgb]{0,0,0}$\beta$}%
}}}}
\put(5776,-1636){\makebox(0,0)[lb]{\smash{{\SetFigFont{8}{9.6}{\rmdefault}{\mddefault}{\updefault}{\color[rgb]{0,0,0}$\beta$}%
}}}}
\put(1246,-4036){\makebox(0,0)[lb]{\smash{{\SetFigFont{8}{9.6}{\rmdefault}{\mddefault}{\updefault}{\color[rgb]{0,0,0}$0$}%
}}}}
\end{picture}%
\caption{Graphs and realizations of Poisson point measures, and their loops. In both cases, the number of loops is $|\caL(\omega)| = 2$.}
\label{fig loops}
\efig
\end{centering}

The relevant probability distribution involves multiplicative weights with respect to loops. We consider a function $w(\gamma)$ that assigns a real number to each loop $\gamma$. We will consider explicit weights below; for now, we just assume that $w(\gamma)$ depends on the loop in a continuous fashion, so all integrals below are well defined. In the case where $w(\gamma)$ is nonnegative for all $\gamma$ we have a probabilistic setting. But it is useful to include the possibility of negative weights as well.

We define the {\it partition function} as
\be
Y(\beta,\Lambda) = \int \rho(\dd\omega) \prod_{\gamma\in\caL(\omega)} w(\gamma).
\ee
We will always consider cases where $Y(\beta,\Lambda)\neq0$. The relevant measure for the model of random loops is given by
\be
\label{La mais-je}
\frac1{Y(\beta,\Lambda)} \, \Bigl( \prod_{\gamma\in\caL(\omega)} w(\gamma) \Bigr) \, \rho(\dd\omega).
\ee
It is a probability measure when the weights are positive.

It is not hard to show that for $\beta$ small, and under some conditions on $w(\gamma)$, the loops have small lengths and the probability that two sites belong to the same loop shows exponential decay with respect to the distance between the sites. See e.g.\ Theorem 6.1 in \cite{GUW}.

The special case of constant weights, $w(\gamma)=\theta$, is interesting, and actually relevant to quantum systems without external magnetic fields. Under some additional assumptions, namely that 
the graph $(\Lambda,\caE)$ be a $d$-dimensional cube with even side lengths $L$ and $d\geq3$, that $u \in [0,\frac12]$, and that $\theta=2,3,\dots$, one can prove the existence of {\it macroscopic loops} when $\beta$ is sufficiently large. Let $\ell_{0}$ denote the random variable for the length of the loop that contains the point $(0,0) \in \Lambda \times [0,\beta]$. The length of the loop is defined as the sum of the length of all its vertical elements.

\begin{theorem}
\label{thm lolo}
Under the assumptions listed in the paragraph above, there exists $c>0$ such that for all $L$,
\[
\bbE \Bigl( \frac{\ell_{0}}{\beta L^{d}} \Bigr) \geq c.
\]
\end{theorem}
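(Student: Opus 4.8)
The plan is to reduce the statement to a long-range-order estimate for the quantum spin system behind the loop model, and then to invoke reflection positivity together with the Dyson--Lieb--Simon infrared bound. Since $\Lambda$ is a torus and $[0,\beta]$ is periodic, the measure \eqref{La mais-je} is invariant under space--time translations, so
\[
\bbE(\ell_{0}) = \int_{0}^{\beta}\dd t\sum_{x\in\Lambda}\bbP\bigl[(0,0)\leftrightarrow(x,t)\bigr],
\]
where $(0,0)\leftrightarrow(x,t)$ is the event that the two space--time points lie on a common loop. When $w(\gamma)\equiv\theta$ with $\theta=2S+1$ an integer $\geq2$, the correspondence with the quantum spin-$S$ model developed in the next two sections (see also \cite{Uel}) identifies $\bbP[(0,0)\leftrightarrow(x,t)]$ with $\tfrac{3}{S(S+1)}$ times an imaginary-time two-point function $\langle S_{x}^{\bullet}(t)\,S_{0}^{\bullet}(0)\rangle$, where $S^{\bullet}$ denotes the relevant spin component (a staggered component on a bipartite graph, read off in the sublattice-rotated frame of the next step). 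Integrating over $t$ converts this into a Duhamel (Bogoliubov) inner product, so that, with $\tilde S^{\bullet}=\sum_{x\in\Lambda}S_{x}^{\bullet}$,
\[
\bbE\Bigl(\frac{\ell_{0}}{\beta L^{d}}\Bigr) = \frac{3}{S(S+1)}\cdot\frac1{L^{2d}}\bigl(\tilde S^{\bullet},\tilde S^{\bullet}\bigr),
\]
which is, up to the positive prefactor, the Duhamel order parameter of the spin model. The theorem thus amounts to showing that this order parameter stays bounded away from $0$ uniformly in $L$.

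Next I would establish reflection positivity. On a cube of even side $L$ with periodic boundary conditions, for $\bsh\equiv0$ and $u\in[0,\tfrac12]$, the Hamiltonians \eqref{def fam1}--\eqref{def fam2} are reflection positive with respect to reflections through pairs of parallel lattice hyperplanes, after the unitary sublattice rotation which, on the bipartite cube, brings the mixture of $T_{xy}$ and $Q_{xy}$ (respectively $P_{xy}$) bonds into the sign pattern required by reflection positivity; this is exactly where the hypotheses ``$L$ even'' and ``$u\leq\tfrac12$'' are used. The Gaussian-domination argument of \cite{DLS} then yields the infrared bound
\[
\frac1{L^{d}}\bigl(\tilde S_{-k}^{\bullet},\tilde S_{k}^{\bullet}\bigr) \leq \frac{C_{\theta}}{\beta\,\eps(k)},\qquad \eps(k)=\sum_{i=1}^{d}(1-\cos k_{i}),\quad \tilde S_{k}^{\bullet}=\sum_{x\in\Lambda}\e{\ii k\cdot x}S_{x}^{\bullet},
\]
for every $k\neq0$ in the dual torus, with $C_{\theta}$ independent of $L$ and $\beta$; here $k=0$ is the ordering wave vector in the rotated frame.

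The final step is the familiar combination of the sum rule and the Falk--Bruch inequality. The equal-time sum rule $\tfrac1{L^{2d}}\sum_{k}\langle\tilde S_{-k}^{\bullet}\tilde S_{k}^{\bullet}\rangle=\langle(S_{0}^{\bullet})^{2}\rangle$ is exact, and since each loop carries a colour uniformly distributed on $\{-S,\dots,S\}$ the single-site value equals $\tfrac1{2S+1}\sum_{a=-S}^{S}a^{2}=\tfrac{S(S+1)}{3}$, a positive constant independent of $L$ and $\beta$. For the non-ordering modes the Falk--Bruch inequality bounds the ordinary two-point function by a quantity built from the infrared-bounded Duhamel two-point function and a bounded ``$f$-sum-rule'' double commutator; summing over $k\neq0$ and using $d\geq3$ to keep $\tfrac1{L^{d}}\sum_{k\neq0}\eps(k)^{-1}$ bounded (by $\int_{[-\pi,\pi]^{d}}\eps(k)^{-1}\dd^{d}k<\infty$), one obtains that this contribution is $O(\beta^{-1/2})$ uniformly in $L$. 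Hence $\tfrac1{L^{2d}}\langle(\tilde S^{\bullet})^{2}\rangle\geq\tfrac{S(S+1)}{3}-O(\beta^{-1/2})$, which exceeds $\tfrac{S(S+1)}{6}$ once $\beta$ is large (a threshold depending on $\theta$ and $d$ but not on $L$). A last application of the Falk--Bruch inequality at the ordering mode — with double commutator $\langle[\tilde S^{\bullet},[H,\tilde S^{\bullet}]]\rangle=O(L^{d})$, controlled by the number of bonds — transfers this lower bound to the Duhamel quantity $\tfrac1{L^{2d}}(\tilde S^{\bullet},\tilde S^{\bullet})$, giving $\bbE\bigl(\tfrac{\ell_{0}}{\beta L^{d}}\bigr)\geq c$ with $c>0$ depending only on $\theta$ and $d$.

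The main obstacle is the reflection-positivity step: one must pin down the sublattice rotation that renders the whole family $u\in[0,\tfrac12]$ reflection positive on the even torus — in particular checking that the loop observable $\bbP[(0,0)\leftrightarrow(x,t)]$ really corresponds, in that frame, to the order parameter at a single wave vector — and then carry out the standard but delicate Gaussian-domination estimate. A secondary point is to track the $\theta$-dependent constants through the Falk--Bruch and sum-rule computation so that the only surviving temperature dependence is the benign $O(\beta^{-1/2})$ error; the role of $d\geq3$ is exactly to make the momentum integral $\int\eps(k)^{-1}\dd^{d}k$ converge.
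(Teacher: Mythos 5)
The paper itself only sketches the argument for Theorem~\ref{thm lolo} (deferring to \cite[Chapter 5]{Uel}), so I am comparing your proposal against that sketch. Your overall architecture is the right one and matches it: rewrite $\bbE(\ell_{0})/(\beta L^{d})$ as $\frac{3}{S(S+1)}L^{-2d}(\tilde S^{3},\tilde S^{3})_{\rm Duhamel}$ via the identity $\bbP(E_{x,y})=\frac{3}{S(S+1)}\langle S^{3}_{x}S^{3}_{y}\rangle$ and its imaginary-time-displaced version, prove an infrared bound by Gaussian domination, and combine the sum rule with the Falk--Bruch inequality (including the final application at the ordering mode to pass from the ordinary to the Duhamel correlation, which you correctly identify as necessary). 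The constants you quote ($\langle (S^{3}_{0})^{2}\rangle = S(S+1)/3$, the role of $d\geq3$ via $\int\eps(k)^{-1}\,\dd^{d}k<\infty$) are also right.

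The genuine gap is in the step you yourself flag as the main obstacle, and your proposed resolution of it is the one the paper explicitly disavows. You propose to establish reflection positivity \emph{at the operator level} for the quantum Hamiltonians $H^{(u)}_{\Lambda,0}$, $u\in[0,\frac12]$, ``after the unitary sublattice rotation.'' The paper states that ``it does not seem possible to prove this using the method of infrared bounds and reflection positivity directly for quantum systems.'' The reason is that for $\theta=2S+1\geq3$ the interactions $T_{xy}$ and $Q_{xy}$ are not quadratic in the spin operators (they are a transposition and a rank-one-type operator on $\bbC^{2S+1}\otimes\bbC^{2S+1}$), so no sublattice rotation brings $uT_{xy}+(1-u)Q_{xy}$ into the Dyson--Lieb--Simon form $-\sum_{i}C_{i}\theta(C_{i})$; the known operator-level RP arguments cover essentially only $u=0$ (and $S=\frac12$ for the interpolating family). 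The actual route in \cite{Uel} is different in kind: one first expands $\e{-\beta H^{(u)}_{\Lambda,0}}$ into Poisson realizations $\omega$ and space-time spin configurations $\sigma\in\Sigma_{\rm per}(\omega)$ (Theorem~\ref{thm Gibbs} and the discussion following Theorem~\ref{thm Z tilde}), obtaining a \emph{classical} functional integral over $(\omega,\sigma)$; reflection positivity and Gaussian domination are then proved for this classical measure, with reflections acting on the space-time configurations, and only afterwards does one read off the infrared bound for the Duhamel two-point function of the quantum model. Without replacing your operator-level RP claim by this classical-representation argument, the infrared bound --- and hence the whole proof --- is not established for the stated range $\theta=2,3,\dots$ and $u\in[0,\frac12]$. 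A secondary, more cosmetic point: for the family $H^{(u)}$ the relevant correlation is the plain (unstaggered) $\langle S^{3}_{x}S^{3}_{y}\rangle$, since spins are constant along loops; the staggering you invoke belongs to the $\tilde H^{(u)}$ family, where spins flip with the vertical direction.
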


See \cite[Chapter 5]{Uel} for the statement with precise conditions. This theorem can be proved using the method of infrared bounds and reflection positivity introduced and developed in \cite{FSS, DLS, KLS1, AFFS}; see Biskup \cite{Bis} for an excellent survey. This theorem implies the occurrence of long-range order in some quantum systems. The main novel result is the occurrence of spin nematic order for the spin 1 model with Hamiltonian $\tilde H_{\Lambda,\bsh}^{(u)}$ defined in Eq.\ \eqref{def Ham spin 1}. Indeed, it follows from Theorem \ref{thm lolo} that
\begin{itemize}
\item $\displaystyle \frac1{|\Lambda|} \sum_{x} \langle S_{0}^{3} S_{x}^{3} \rangle > c$ for the model $H = -\sum (\vec S_{x} \cdot \vec S_{y})^{2}$, with $c>0$ independent of $\Lambda$, $d\geq5$, $\beta$ large enough. (This is actually N\'eel order.)
\item $\displaystyle \frac1{|\Lambda|} \sum_{x} \bigl( \langle (S_{0}^{3})^{2} (S_{x}^{3})^{2} \rangle - \langle (S_{0}^{3})^{2} \rangle \langle (S_{x}^{3})^{2} \rangle \bigr) > c$ for the model $H = -\sum \bigl( J_{1} \vec S_{x} \cdot \vec S_{y} + J_{2} (\vec S_{x} \cdot \vec S_{y})^{2} \bigr)$, with $c>0$ independent of $\Lambda$, $0 \leq J_{1} \leq \frac12 J_{2}$, $d\geq5$, $\beta$ large enough. (When $J_{1} \lesssim \frac12 J_{2}$ the result holds for $d\geq3$.)
\end{itemize}
It does not seem possible to prove this using the method of infrared bounds and reflection positivity directly for quantum systems. The method in \cite{Uel} consists in studying the model $H_{\Lambda}^{(u)}$, which is not related to $\tilde H_{\Lambda}^{(u)}$ in any obvious way when $u \neq 0,1$. The Gibbs operator $\e{-\beta H_{\Lambda}^{(u)}}$ can be expanded in random loops and ``space-time spin configurations'' (see next section), which gives a sort of classical model that is reflection positive. This allows to prove ``Gaussian domination'', leading to infrared bounds for the Duhamel two-point function. Combining with the Falk-Bruch inequality, as in \cite{DLS, KLS1}, one obtains Theorem \ref{thm lolo}. The results for $\tilde H_{\Lambda}^{(u)}$ are then consequences of the loop representation.

\section{Gibbs operator and partition functions}

The first result is a formula for the Gibbs operator $\e{-\beta H}$ in terms of the Poisson point measure $\rho(\dd\omega)$. To a realization $\omega$ corresponds a sequence $(A_{1},t_{1}), \dots, (A_{n},t_{n})$ where $0 < t_{1} < \dots < t_{n} < \beta$ are the times for the occurrence of events in $\omega$, and $A_{j}$ is the operator $T_{xy}$ if the event of time $t_{j}$ is a cross at $\{x,y\} \in \caE$; $A_{j}$ is the operator $Q_{xy}$ if the event of time $t_{j}$ is a double bar at $\{x,y\}$.

\begin{theorem}
\label{thm Gibbs}
We have
\[
\e{-\beta H_{\Lambda,\bsh}^{(u)}} = \int\rho(\dd\omega) \e{-(\beta-t_{n}) \sum h_{x} S_{x}^{3}} A_{n} \e{-(t_{n}-t_{n-1}) \sum h_{x} S_{x}^{3}} A_{n-1} \dots A_{1} \e{-t_{1} \sum h_{x} S_{x}^{3}}.
\]
\end{theorem}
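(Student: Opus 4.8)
The plan is to derive the identity from the Duhamel (Dyson) expansion of $\e{-\beta H_{\Lambda,\bsh}^{(u)}}$, after splitting the Hamiltonian into the part that stays and the part that gets expanded. Write $H_{\Lambda,\bsh}^{(u)} = H_{0} + V$ with
\[
H_{0} = |\caE| \, \Id - \sum_{x\in\Lambda} h_{x} S_{x}^{3}, \qquad V = -\sum_{\{x,y\}\in\caE} \bigl( u T_{xy} + (1-u) Q_{xy} \bigr),
\]
so that $-V = \sum_{\{x,y\}\in\caE} ( u T_{xy} + (1-u) Q_{xy} )$ is a sum of terms with nonnegative coefficients. Since $\dim\caH_{\Lambda}<\infty$ all operators are bounded, so the relation $\e{-\beta H} = \e{-\beta H_{0}} + \int_{0}^{\beta}\e{-(\beta-t)H_{0}}(-V)\e{-tH}\,\dd t$ can be iterated, with the remainder controlled by $\|V\|$, to yield the norm-convergent series
\[
\e{-\beta H} = \sum_{n\geq 0} \int_{0<t_{1}<\dots<t_{n}<\beta} \e{-(\beta-t_{n})H_{0}} (-V) \e{-(t_{n}-t_{n-1})H_{0}} (-V) \cdots (-V) \e{-t_{1}H_{0}} \, \dd t_{1}\cdots\dd t_{n}.
\]

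The next step is to expand each factor $-V$ into its edge-by-edge pieces, so that the $n$-th term becomes a sum over sequences $((e_{1},\alpha_{1}),\dots,(e_{n},\alpha_{n}))$, where $e_{j}\in\caE$ and $\alpha_{j}$ is either ``cross'' (weight $u$, operator $A_{j}=T_{e_{j}}$) or ``double bar'' (weight $1-u$, operator $A_{j}=Q_{e_{j}}$). Because $|\caE|\,\Id$ is a scalar, $\e{-sH_{0}}$ factors as $\e{-s|\caE|}$ times the on-site operator generated by $-\sum_{x}h_{x}S_{x}^{3}$; the scalar exponents over the $n+1$ consecutive intervals telescope to a global factor $\e{-\beta|\caE|}$, and what remains flanking $A_{n},\dots,A_{1}$ is precisely the product of on-site propagators written in the statement.

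Finally I would identify this series with the Poisson integral. View the crosses and double bars together as a single marked Poisson point process on $\caE\times[0,\beta]$ whose intensity is the product of counting measure on $\caE$, Lebesgue measure on $[0,\beta]$, and the measure $u\,\delta_{\mathrm{cross}}+(1-u)\,\delta_{\mathrm{bar}}$ on labels; its total mass is $\beta|\caE|$, and its law $\rho$ satisfies
\[
\int\rho(\dd\omega)\,F(\omega) = \e{-\beta|\caE|}\sum_{n\geq0}\frac1{n!}\sum_{e_{1},\dots,e_{n}}\;\sum_{\alpha_{1},\dots,\alpha_{n}}\Bigl(\prod_{j=1}^{n}w_{\alpha_{j}}\Bigr)\int_{[0,\beta]^{n}}F(\omega)\,\dd t_{1}\cdots\dd t_{n},
\]
with $w_{\mathrm{cross}}=u$, $w_{\mathrm{bar}}=1-u$. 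The integrand in the theorem depends on $\omega$ only through the time-ordered list of its events, so integrating over the cube $[0,\beta]^{n}$ gives $n!$ times the integral over the simplex $\{0<t_{1}<\dots<t_{n}<\beta\}$; this $n!$ cancels the Poisson factor $1/n!$ and reproduces exactly the series from the previous step.

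I expect the only delicate point to be this last piece of bookkeeping: verifying that the constant $\e{-\beta|\caE|}$ produced by the ``$-1$'' per edge in the Hamiltonian is exactly the Poisson normalization, and that the ordered-versus-unordered combinatorics ($n!$ relabellings of the events against the $1/n!$ in the Poisson weights) match. The non-commutativity of $T_{xy}$ and $Q_{xy}$ with the $S_{x}^{3}$'s causes no difficulty, since in the argument above the on-site exponentials are never commuted past the inserted operators.
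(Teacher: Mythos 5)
Your proposal is correct, but it follows a genuinely different route from the paper's. The paper obtains Theorem \ref{thm Gibbs} by discretizing $[0,\beta]$, replacing the Poisson point measure by independent Bernoulli events on each small time interval and edge, grouping terms, and invoking the Trotter product formula; the identity then emerges in the limit of vanishing mesh. You instead split $H_{\Lambda,\bsh}^{(u)}=H_{0}+V$ and iterate the Duhamel identity to get the Dyson series directly in the continuum, and then match it term by term with the expectation over the marked Poisson point process. The two bookkeeping checks you isolate are exactly the right ones and they do work out: the $1/n!$ in the Poisson weights cancels against the $n!$ orderings of the events when passing from the cube to the simplex, and the factor $\e{-\beta|\caE|}$ coming from the constant $-1$ per edge in \eqref{def fam1} is precisely the Poisson normalization $\e{-\mu(\caE\times[0,\beta])}$. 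Your version buys a cleaner argument (no discretization limit, norm-convergent series from the outset, no commutation issues since the Dyson series is time-ordered by construction), at the cost of being slightly less suggestive of why a Poisson process appears in the first place.

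One small caveat: with $H_{0}=|\caE|\,\Id-\sum_{x}h_{x}S_{x}^{3}$ one has $\e{-sH_{0}}=\e{-s|\caE|}\exp\bigl\{+s\sum_{x}h_{x}S_{x}^{3}\bigr\}$, so your expansion actually produces the on-site propagators $\exp\{+(t_{j+1}-t_{j})\sum_{x}h_{x}S_{x}^{3}\}$, i.e.\ the opposite sign from the one displayed in the statement (test the case $\caE=\emptyset$). This is a sign inconsistency internal to the paper's conventions rather than an error of yours, and it is harmless downstream because the spectrum of $S_{x}^{3}$ is symmetric under $a\mapsto-a$; but since your derivation is the one that pins the sign down, you should state the result with the plus sign (or equivalently replace $h_{x}$ by $-h_{x}$) rather than assert that it reproduces the displayed formula verbatim.
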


The same representation applies to the operator $\e{-\beta \tilde H_{\Lambda,\bsh}^{(u)}}$, but with $P_{xy}$ instead of $Q_{xy}$ when double bars occur.
The proof can be done by discretizing the time interval $[0,\beta]$, linearizing the Poisson point measure, grouping terms wisely and invoking the Trotter product formula.

Next, we consider partition functions. Given a loop $\gamma$, we denote by $\ell_{x}(\gamma)$ the length of the vertical element(s) of the loop at site $x\in\Lambda$. We have $0 \leq \ell_{x}(\gamma) \leq \beta$ and, for almost all realizations $\omega$,
\be
\sum_{\gamma \in \caL(\omega)} \sum_{x\in\Lambda} \ell_{x}(\gamma) = \beta |\Lambda|.
\ee

\begin{theorem}
\label{thm Z}
Given $S \in \frac12 \bbN$, let
\[
w(\gamma) = \sum_{a=-S}^{S} \exp\Bigl\{ a \sum_{x\in\Lambda} h_{x} \ell_{x}(\gamma) \Bigr\}.
\]
Then for all $u\in[0,1]$, we have
\[
Z^{(u)}(\beta,\Lambda,\bsh) = \int\rho(\dd\omega) \prod_{\gamma\in\caL(\omega)} w(\gamma).
\]
\end{theorem}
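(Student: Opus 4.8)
The plan is to start from the formula for $\e{-\beta H_{\Lambda,\bsh}^{(u)}}$ in Theorem \ref{thm Gibbs}, take the trace over $\caH_{\Lambda}$, and evaluate it in the classical-configuration basis. Inserting resolutions of the identity $\sum_{\sigma}|\sigma\rangle\langle\sigma| = \Id$ between consecutive factors, write $\sigma^{(0)},\dots,\sigma^{(n)}$ for the intermediate configurations, with the trace imposing $\sigma^{(n)}=\sigma^{(0)}$. The factors $\e{-(t_{j+1}-t_{j})\sum_{x} h_{x} S_{x}^{3}}$ are diagonal and contribute the scalars $\exp\{-(t_{j+1}-t_{j})\sum_{x} h_{x}\sigma_{x}^{(j)}\}$ (likewise for the first and last intervals), while each $A_{j}$ contributes a product of Kronecker deltas: if $A_{j}=T_{xy}$, then $\langle\sigma^{(j)}|T_{xy}|\sigma^{(j-1)}\rangle$ equals $1$ when $\sigma^{(j)}$ agrees with $\sigma^{(j-1)}$ everywhere except that the values at $x$ and $y$ are interchanged, and $0$ otherwise; if $A_{j}=Q_{xy}$, then, since $Q_{xy}=|\phi_{xy}\rangle\langle\phi_{xy}|$ with $|\phi_{xy}\rangle=\sum_{a=-S}^{S}|a,a\rangle$, the matrix element equals $\bigl(\prod_{z\neq x,y}\delta_{\sigma_{z}^{(j)},\sigma_{z}^{(j-1)}}\bigr)\,\delta_{\sigma_{x}^{(j)},\sigma_{y}^{(j)}}\,\delta_{\sigma_{x}^{(j-1)},\sigma_{y}^{(j-1)}}$.

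Next I would read the resulting sum as an assignment of a value in $\{-S,\dots,S\}$ to each point of $\Lambda\times[0,\beta]$, constant on the vertical segments between successive events, subject exactly to the constraints above. The key point is that these constraints say precisely that the value is transported unchanged along each strand through a cross, and is forced to agree across the cup and (separately) across the cap of each double bar; equivalently, the value is constant along every loop of $\caL(\omega)$ and values on distinct loops are unconstrained. Here the trace is essential: identifying time $\beta$ with time $0$ is what closes the strands into the loops of $\caL(\omega)$, so every space-time point lies on exactly one loop and there are no free endpoints. Hence the admissible configurations are in bijection with families $(a_{\gamma})_{\gamma\in\caL(\omega)}\in\{-S,\dots,S\}^{\caL(\omega)}$.

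It then remains to collect the magnetic-field weights. Each vertical segment lies on a unique loop, and summing the lengths of the segments of a loop $\gamma$ at a site $x$ gives $\ell_{x}(\gamma)$; therefore the product of the scalar factors over all time-intervals equals $\prod_{\gamma\in\caL(\omega)}\exp\{-a_{\gamma}\sum_{x\in\Lambda} h_{x}\ell_{x}(\gamma)\}$ for the configuration labelled by $(a_{\gamma})$. Summing independently over each $a_{\gamma}\in\{-S,\dots,S\}$, and using that $\{-S,\dots,S\}$ is symmetric under $a\mapsto-a$ to flip the sign, produces $\prod_{\gamma} w(\gamma)$ with $w(\gamma)=\sum_{a=-S}^{S}\exp\{a\sum_{x} h_{x}\ell_{x}(\gamma)\}$. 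Integrating over $\rho(\dd\omega)$ then gives the claim.

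I expect the only real obstacle to be the verification in the middle paragraph: matching the precise definition of the loops $\caL(\omega)$ (crosses as strand crossings, double bars as $\cup\cap$ reconnections) to the combinatorics of the Kronecker deltas coming from $T_{xy}$, which merely permutes basis vectors, and from $Q_{xy}=|\phi_{xy}\rangle\langle\phi_{xy}|$, which forces equality on the two strands entering from below and, separately, on the two leaving above; and checking that, under the periodic identification from the trace, these impose no constraint beyond ``constant on each loop.'' Everything else — the diagonal action of the field terms, the factorization of the weight over loops, and the elementary sums over the $a_{\gamma}$ — is routine bookkeeping. (One could instead bypass Theorem \ref{thm Gibbs} and run the Trotter expansion directly, but building on Theorem \ref{thm Gibbs} is cleaner.)
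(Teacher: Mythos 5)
Your proposal is correct and follows essentially the same route as the paper: apply Theorem \ref{thm Gibbs}, insert resolutions of the identity in the $S^3$-eigenbasis so that the trace becomes a sum over space-time spin configurations constant on each loop, factorize the field weight over loops, and sum over the loop labels (using the symmetry $a\mapsto-a$ to match the stated sign of $w(\gamma)$). The verification you flag in your middle paragraph is exactly what the paper compresses into ``because of the definitions of $T_{xy}$ and $Q_{xy}$,'' so nothing further is needed.
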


In the case where $h_{x} \equiv 0$, we have $w(\gamma) = 2S+1$ for all loops, and the partition function is equal to $\int (2S+1)^{|\caL(\omega)|} \rho(\dd\omega)$.

The corresponding formula for the model with Hamiltonian $\tilde H_{\Lambda,\bsh}^{(u)}$ is more complicated, as it involves vertical directions of loops. Namely, let us choose an orientation for the loops, and let $\ell_{x}^{+}(\gamma)$ (resp.\ $\ell_{x}^{-}(\gamma)$) denote the vertical length of the elements of $\gamma$ at $x$ that move up (resp.\ that move down). We have $\ell_{x}^{+}(\gamma) + \ell_{x}^{-}(\gamma) = \ell_{x}(\gamma)$. We only state the theorem in the case of integer $S$, as there are inelegant signs when $S$ is half integer.

\begin{theorem}
\label{thm Z tilde}
Given $S \in \bbN$, let
\[
w(\gamma) = \sum_{a=-S}^{S} \exp\Bigl\{ a \sum_{x\in\Lambda} h_{x} \bigl[ \ell_{x}^{+}(\gamma) - \ell_{x}^{-}(\gamma) \bigr] \Bigr\}.
\]
Then for all $u\in[0,1]$, we have
\[
\tilde Z^{(u)}(\beta,\Lambda,\bsh) = \int\rho(\dd\omega) \prod_{\gamma\in\caL(\omega)} w(\gamma).
\]
\end{theorem}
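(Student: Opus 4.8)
The plan is to start from the representation of $\e{-\beta\tilde H^{(u)}_{\Lambda,\bsh}}$ provided by Theorem \ref{thm Gibbs} (with $P_{xy}$ in place of $Q_{xy}$), take the trace over $\caH_\Lambda$, and expand it in the classical basis $|\sigma\rangle$ of eigenstates of the $S^3_x$. Concretely, one inserts a resolution of the identity $\sum_{\sigma_j}|\sigma_j\rangle\langle\sigma_j|$ between each pair of consecutive operators in the product $A_n\cdots A_1$ (and closes the trace by identifying $\sigma_{n+1}=\sigma_0$ on a circle of circumference $\beta$). Because each $\e{-t\sum h_x S^3_x}$ is diagonal, acting on $|\sigma\rangle$ it merely produces the scalar $\e{t\sum_x h_x\sigma_x}$ while leaving the configuration unchanged; meanwhile each $A_j$ is either $T_{xy}$, which sends $|\dots a\dots b\dots\rangle\mapsto|\dots b\dots a\dots\rangle$, or $P_{xy}$, whose matrix element $\langle a,b|P_{xy}|c,d\rangle=(-1)^{a-c}\delta_{a,-b}\delta_{c,-d}$ forces the two incoming spins to be opposite, the two outgoing spins to be opposite, and contributes a sign.

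The key structural observation — already implicit in the loop picture of Figure \ref{fig loops} — is that a consistent assignment of a classical configuration to every vertical segment of the space-time diagram is exactly a labelling of the loops: along a loop the spin value is \emph{transported} through crosses (so it stays equal to the same $a\in\{-S,\dots,S\}$ along an entire loop if one reads it in the chosen orientation) and is \emph{negated} through double bars. Hence summing over all classical configurations $\sigma_0,\dots,\sigma_{n}$ compatible with a fixed realization $\omega$ amounts to choosing, independently for each loop $\gamma\in\caL(\omega)$, one value $a_\gamma\in\{-S,\dots,S\}$. The second step is therefore to rewrite the trace as $\int\rho(\dd\omega)\sum_{(a_\gamma)_\gamma}\bigl(\prod_\gamma(\text{sign}_\gamma)\bigr)\bigl(\prod_\gamma\text{(weight from the }h\text{ factors)}\bigr)$, and to check that the sign and the field-weight both factorize over loops. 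For integer $S$ the sign from each double bar is $(-1)^{a-(-a)}=(-1)^{2a}=1$, since $2a\in 2\bbZ$; this is precisely where the hypothesis $S\in\bbN$ is used, and explains the ``inelegant signs'' in the half-integer case. The field contribution: as one travels once around $\gamma$ in the chosen orientation, an upward segment of length $\ell$ at $x$ carrying value $a_\gamma$ contributes $\e{a_\gamma h_x\ell}$ to the product of the diagonal factors, while a downward segment, where the loop is being traversed against the space-time flow of $\e{-t\sum h_x S^3}$ and the carried value is $-a_\gamma$ in the upward reading, contributes $\e{(-a_\gamma) h_x\cdot(\text{downward length})}$ with the opposite sign of $t$-increment — a careful bookkeeping of orientations turns the total exponent for loop $\gamma$ into $a_\gamma\sum_x h_x[\ell^+_x(\gamma)-\ell^-_x(\gamma)]$. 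Summing over $a_\gamma$ then gives exactly $w(\gamma)=\sum_{a=-S}^S\exp\{a\sum_x h_x[\ell^+_x(\gamma)-\ell^-_x(\gamma)]\}$, and the product over loops yields the claimed identity.

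I expect the main obstacle to be the orientation bookkeeping in the field-weight, i.e.\ showing rigorously that the net exponent collected along a loop is $a_\gamma\sum_x h_x[\ell^+_x(\gamma)-\ell^-_x(\gamma)]$ rather than $a_\gamma\sum_x h_x\ell_x(\gamma)$ as in Theorem \ref{thm Z}. The subtlety is that through a $P$-event the loop value flips sign, so when the loop runs ``backwards'' in space-time its local value (read in the global orientation) is $-a_\gamma$, and simultaneously the time-ordering of the diagonal $\e{-t\sum h_xS^3}$ factors is reversed; these two sign reversals interact, and one must verify they do not simply cancel but instead produce the $\ell^+-\ell^-$ combination. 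The cleanest way to handle this is to fix the orientation of each loop once and for all, parametrise the loop by its arc length, and track the accumulated weight as an ordered product, comparing it segment by segment with the time-ordered product $\e{-(\beta-t_n)\sum h_xS^3}A_n\cdots A_1\e{-t_1\sum h_xS^3}$ appearing in Theorem \ref{thm Gibbs}; a small picture (a single loop winding through a few crosses and bars) makes the cancellation/non-cancellation transparent. The remaining steps — factorisation of the sign for integer $S$, and the fact that the set of admissible $\sigma$'s is in bijection with $\{-S,\dots,S\}^{\caL(\omega)}$ — are routine once the loop-transport rule for $T$ and $P$ is stated carefully.
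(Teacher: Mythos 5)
Your overall route coincides with the paper's: expand $\Tr \e{-\beta\tilde H^{(u)}_{\Lambda,\bsh}}$ via Theorem \ref{thm Gibbs}, pass to space-time spin configurations whose absolute value is constant along each loop and whose sign flips at every reversal of the vertical direction, identify the admissible configurations with labels $a_\gamma\in\{-S,\dots,S\}$, and collect the field weight $a_\gamma\sum_x h_x[\ell_x^+(\gamma)-\ell_x^-(\gamma)]$ per loop. (That last bookkeeping is actually simpler than you fear: the diagonal factors only ever see the actual value $\sigma_{x,t}=\pm a_\gamma$ integrated over real time $t$, so no ``reversed time-ordering'' enters; the $\ell^+-\ell^-$ combination comes solely from the value being $-a_\gamma$ on downward segments.)

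However, your treatment of the signs is wrong as stated. The sign attached to a double bar at $\{x,y\}\times t$ is $(-1)^{a-c}$ with $a=\sigma_{x,t+}$ and $c=\sigma_{x,t-}$, i.e.\ it compares the spin at $x$ \emph{just above} the bar with the spin at $x$ \emph{just below} it. The Kronecker deltas only force $\sigma_{x,t+}=-\sigma_{y,t+}$ and $\sigma_{x,t-}=-\sigma_{y,t-}$; there is no constraint $c=-a$, so the computation $(-1)^{a-(-a)}=1$ does not apply. Indeed $\sigma_{x,t-}$ and $\sigma_{x,t+}$ typically lie on different strands, possibly on different loops: with two double bars on the same edge, the sign at each bar is $(-1)^{a_{\gamma_1}-a_{\gamma_2}}$, which equals $-1$ for, say, $a_{\gamma_1}=1$ and $a_{\gamma_2}=0$. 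So the individual double-bar signs are \emph{not} all $+1$, even for integer $S$. What is true --- and what the paper uses --- is that the total sign factorizes over loops: write $(-1)^{a-c}=(-1)^{\sigma_{x,t+}}\,(-1)^{\sigma_{x,t-}}$ (legitimate for integer spin values), attach each factor to the direction-reversal ``corner'' of the loop passing through that point, note that each such factor equals $(-1)^{\pm a_\gamma}=(-1)^{a_\gamma}$, and observe that a closed loop reverses its vertical direction an even number of times; hence each loop contributes $(-1)^{a_\gamma\cdot(\mathrm{even})}=+1$. This global, per-loop cancellation is the step where $S\in\bbN$ is genuinely used (the paper's ``each loop involves an even number of minus signs''), and it cannot be localized to a single event as your argument claims.
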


In order to prove Theorems \ref{thm Z} and \ref{thm Z tilde}, we need the concept of {\it space-time configurations}. This is also useful in the calculation of correlation functions. A space-time spin configuration is a function
\be
\sigma : \Lambda \times [0,\beta] \longrightarrow \{-S, -S+1, \dots, S\}.
\ee
such that $\sigma_{x,t}$ is piecewise constant in $t$, for any $x$. Given a realization $\omega$ of the Poisson point measure, let $\Sigma_{\rm per}(\omega)$ denote the set of space-time spin configurations that take constant values along each loop. See Fig.\ \ref{fig spinconfig} for an illustration. Let $\Sigma(\omega)$ denote the set of configurations that are compatible with $\omega$, but without requiring that $\sigma_{x,0} = \sigma_{x,\beta}$. Notice that
\be
|\Sigma_{\rm per}(\omega)| = (2S+1)^{|\caL(\omega)|}.
\ee

\begin{centering}
\bfig
\includegraphics[width=85mm]{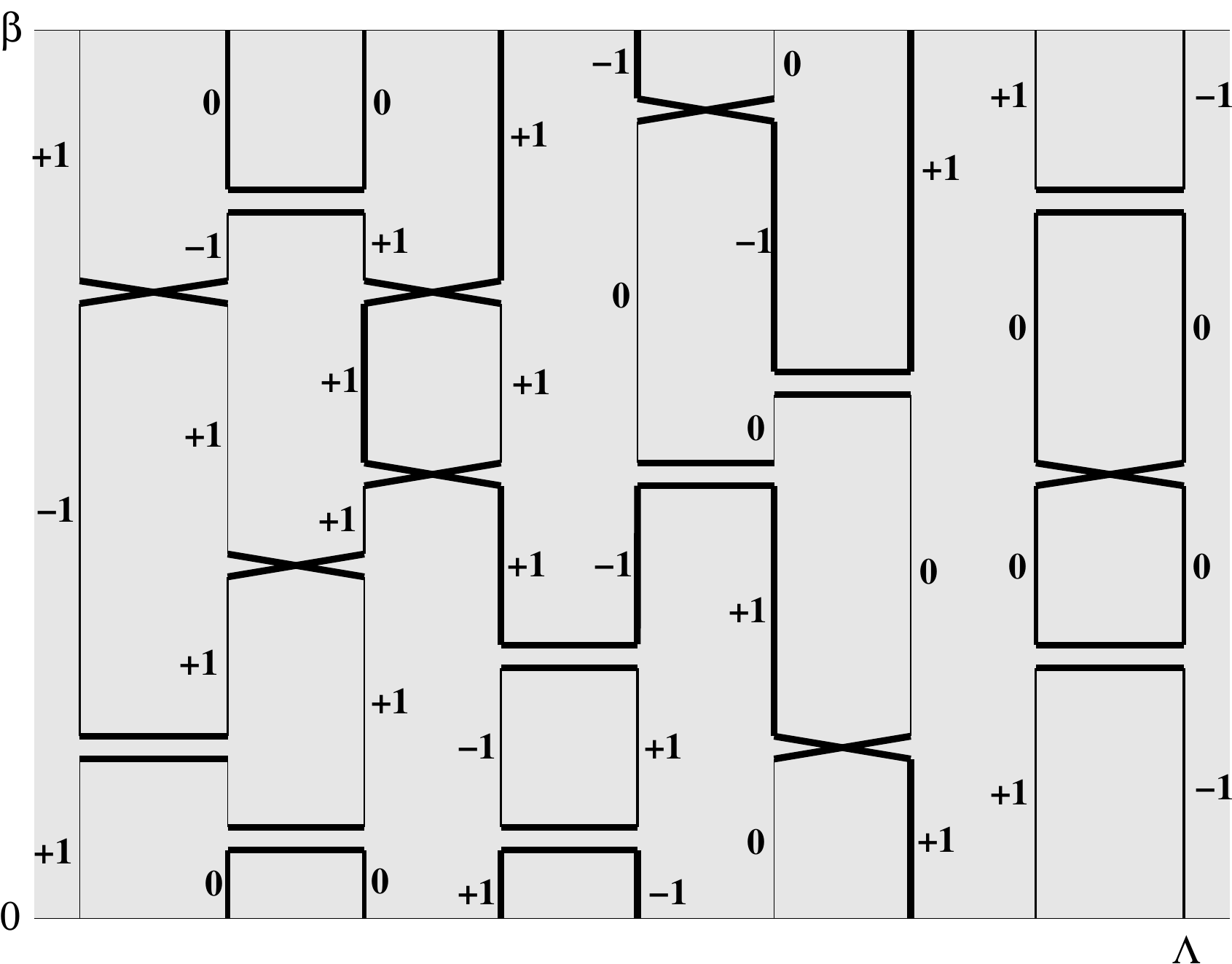}
\caption{Illustration for a realization of the measure $\rho_{\iota}$ and a compatible space-time spin configuration.}
\label{fig spinconfig}
\efig
\end{centering}

We use Theorem \ref{thm Gibbs} and we insert the resolution of the identity $\Id = \sum_{\sigma} |\sigma\rangle \langle\sigma|$ on the left of each transition $A_{j}$. Because of the definitions of $T_{xy}$ and $Q_{xy}$, we get
\be
\begin{split}
\Tr \e{-\beta H_{\Lambda,\bsh}^{(u)}} &= \int\rho(\dd\omega) \sum_{\sigma \in \Sigma_{\rm per}(\omega)} \exp\Bigl\{ -\sum_{x\in\Lambda} \int_{0}^{\beta} h_{x} \sigma_{x,t} \dd t \Bigr\} \\
&= \int\rho(\dd\omega) \prod_{\gamma \in \caL(\omega)} \sum_{a=-S}^{S} \e{-a \sum_{x} h_{x} \ell_{x}(\gamma)}.
\end{split}
\ee
This gives the claim of Theorem \ref{thm Z}. The proof of Theorem \ref{thm Z tilde} is similar but with different sets of space-time spin configurations. Let $\tilde\Sigma(\omega)$, $\tilde\Sigma_{\rm per}(\omega)$ with the prescription that the sign of the spin changes when the vertical direction of the loop changes. The calculation is then the same, with additional signs due to the double bars. Namely, a double bar at $\{x,y\} \times t$ gives the sign $(-1)^{\sigma_{x,t-} + \sigma_{x,t+}}$. Fortunately each loop involves an even number of minus signs, so the weight is positive.

\section{Correlation functions}

We restrict ourselves to two-point correlation functions. We also make the important simplification $h_{x} \equiv 0$, although expressions can certainly be derived for nonzero external magnetic fields. The loop correlations are given by just three events:
\begin{itemize}
\item $E_{x,y}^{+}$ is the set of all realizations $\omega$ such that $x$ and $y$ belong to the same loop, and with identical vertical direction at these points.
\item $E_{x,y}^{-}$ is the set of all $\omega$ such that $x$ and $y$ belong to the same loop, and with opposite vertical directions at these points.
\item $E_{x,y} = E_{x,y}^{+} \cup E_{x,y}^{-}$ is the set of all $\omega$ such that $x$ and $y$ belong to the same loop.
\end{itemize}
These events are illustrated in Fig.\ \ref{fig correl}.

\bfig
\includegraphics[width=120mm]{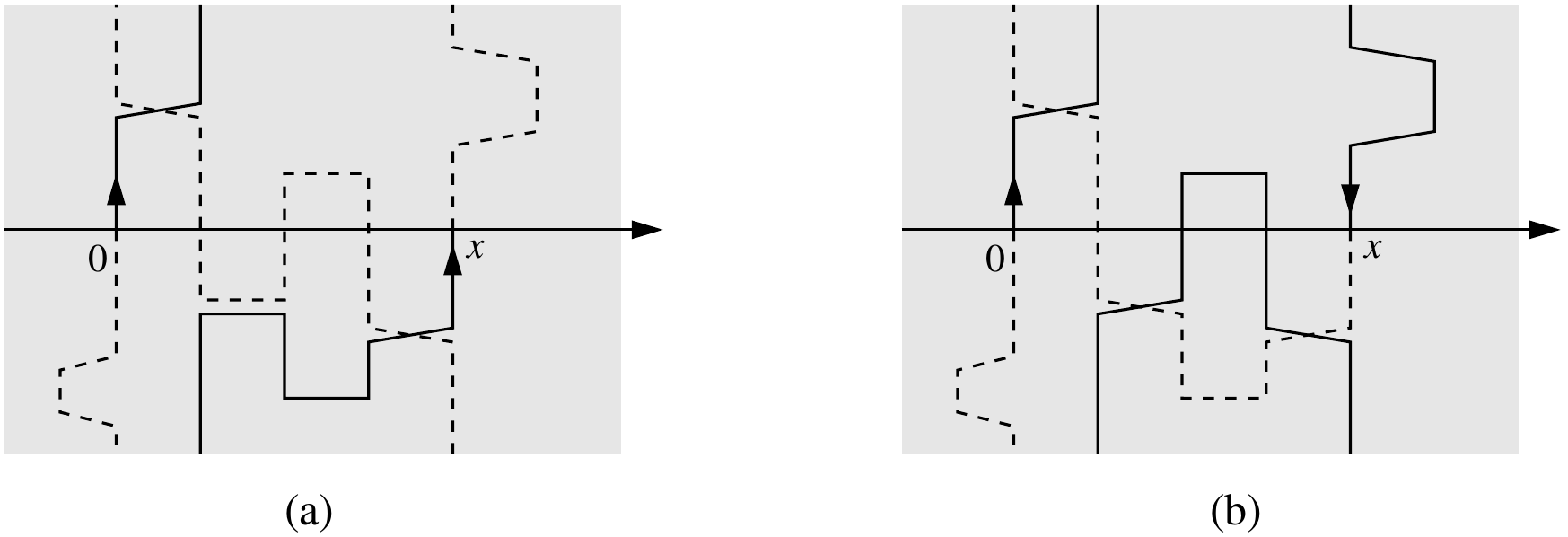}
\caption{Illustration for (a) the event $E_{0,x}^{+}$; (b) the event $E_{0,x}^{-}$.}
\label{fig correl}
\efig

Let $A_{x}$ be an operator of the form $A \otimes \Id_{\Lambda\setminus\{x\}}$ and $B_{y}$ be an operator of the form $B \otimes \Id_{\Lambda\setminus\{y\}}$, where $A,B$ are operators on $\bbC^{2S+1}$. We consider the two-point function
\be
\langle A_{x} B_{y} \rangle = \frac1{Z^{(u)}(\beta,\Lambda,0)} \Tr A_{x} B_{y} \e{-\beta H_{\Lambda,0}^{(u)}}.
\ee
We use the notation $\tr$ for the trace in $\bbC^{2S+1}$. Let $\bbP(\cdot)$ denote the probability with respect to the random loop measure $Z^{(u)}(\beta,\Lambda,0)^{-1} (2S+1)^{|\caL(\omega)|} \rho(\dd\omega)$.

\begin{theorem}
\label{thm corr}
For $x\neq y$, the correlation function above is given by
\[
\langle A_{x} B_{y} \rangle = \tfrac1{2S+1} (\tr AB) \; \bbP(E_{x,y}^{+}) + \tfrac1{2S+1} (\tr AB^{\rm t}) \; \bbP(E_{x,y}^{-}) + \tfrac1{(2S+1)^{2}} (\tr A) (\tr B) \; \bbP(E_{x,y}^{\rm c}).
\]
\end{theorem}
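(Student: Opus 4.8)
The plan is to feed Theorem~\ref{thm Gibbs} (with $\bsh\equiv0$) into the space-time configuration machinery already used to prove Theorem~\ref{thm Z}, the only new ingredient being the two operator insertions at the ``seam'' $t=0\equiv\beta$. First I would write, using Theorem~\ref{thm Gibbs},
\[
\Tr A_x B_y \e{-\beta H_{\Lambda,0}^{(u)}} = \int \rho(\dd\omega)\, \Tr\bigl( A_x B_y\, A_n A_{n-1} \cdots A_1 \bigr),
\]
with $A_j=T_{x'y'}$ or $Q_{x'y'}$ according to the $j$-th Poisson event. Inserting $\Id=\sum_\sigma|\sigma\rangle\langle\sigma|$ between every pair of consecutive operators of this cyclic product rewrites the trace as a sum over space-time spin configurations. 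Exactly as for Theorem~\ref{thm Z}, the matrix elements of the $T$'s and $Q$'s force the configuration to be constant along every loop of $\caL(\omega)$; the one difference is that at the seam the configuration may now jump at the single site $x$, with weight $\langle\sigma_{x,0^+}|A|\sigma_{x,\beta^-}\rangle$, and at the single site $y$, with weight $\langle\sigma_{y,0^+}|B|\sigma_{y,\beta^-}\rangle$, while remaining continuous at every other site. Since $x\neq y$ and, for almost every $\omega$, no Poisson event sits at time $0$, each of $(x,0)$ and $(y,0)$ belongs to a well-defined loop which it meets exactly once.

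Next comes the case analysis according to which of $E_{x,y}^{\rm c}$, $E_{x,y}^+$, $E_{x,y}^-$ occurs. On $E_{x,y}^{\rm c}$ the loop through $(x,0)$ carries no other marked point, so running around it (the spin being unchanged across unmarked seam crossings, crosses and double bars) forces $\sigma_{x,0^+}=\sigma_{x,\beta^-}=:a$ and yields the scalar $\langle a|A|a\rangle$; summing over $a$ gives $\tr A$, and similarly the loop through $(y,0)$ gives $\tr B$. On $E_{x,y}$ the two marked points lie on a common loop $\gamma$, which they cut into two arcs, each carrying a constant spin; going once around $\gamma$ in its orientation one therefore collects one matrix element of $A$ and one of $B$ with the two arc-spins as shared indices. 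If $\gamma$ has the same vertical direction at the two seam points (event $E_{x,y}^+$) this is $\sum_{a,a'}\langle a'|A|a\rangle\langle a|B|a'\rangle=\tr(AB)$; if the directions are opposite (event $E_{x,y}^-$) one of the two crossings is read off against the loop's orientation, which by $\langle c|B|d\rangle=\langle d|B^{\rm t}|c\rangle$ is the same as replacing $B$ by $B^{\rm t}$, giving $\sum_{a,a'}\langle a'|A|a\rangle\langle a'|B|a\rangle=\tr(AB^{\rm t})$. In every case each loop other than those carrying $x$ or $y$ contributes an independent factor $2S+1$.

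Assembling the three cases gives
\[
\Tr A_x B_y \e{-\beta H_{\Lambda,0}^{(u)}} = \int \rho(\dd\omega)\,(2S+1)^{|\caL(\omega)|} \Bigl[ \tfrac{\tr(AB)}{2S+1}\bbone_{E_{x,y}^+} + \tfrac{\tr(AB^{\rm t})}{2S+1}\bbone_{E_{x,y}^-} + \tfrac{(\tr A)(\tr B)}{(2S+1)^2}\bbone_{E_{x,y}^{\rm c}} \Bigr],
\]
and dividing by $Z^{(u)}(\beta,\Lambda,0)=\int\rho(\dd\omega)(2S+1)^{|\caL(\omega)|}$ identifies each bracketed coefficient with $\bbP$ of the corresponding event, which is the assertion.

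The step I expect to be most delicate is the $E_{x,y}^-$ identity: one has to check precisely how the chosen orientation of a loop behaves across double bars (which reverse it) and how this meshes with the fixed rule $\langle\sigma_{y,0^+}|B|\sigma_{y,\beta^-}\rangle$ by which the insertion $B_y$ is extracted from the trace, so as to see that ``opposite vertical directions at $x$ and $y$'' is exactly the situation producing $B^{\rm t}$ rather than $B$. The constancy-of-the-spin-along-loops statement and the bookkeeping of the surviving $2S+1$ factors are routine given the proof of Theorem~\ref{thm Z}.
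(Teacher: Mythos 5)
Your proposal is correct and follows essentially the same route as the paper: apply Theorem~\ref{thm Gibbs}, expand in space-time spin configurations that are constant along loops except at the seam insertions at $x$ and $y$, split the integral over $E_{x,y}^{+}$, $E_{x,y}^{-}$, $E_{x,y}^{\rm c}$, and evaluate the resulting matrix-element sums as $\tr AB$, $\tr AB^{\rm t}$, and $(\tr A)(\tr B)$ with the surviving factors of $2S+1$. Your extra care about loop orientation across double bars in the $E_{x,y}^{-}$ case is exactly the point the paper handles by writing the matrix element as $\langle a,a|A_{x}B_{y}|b,b\rangle=\langle a|A|b\rangle\langle a|B|b\rangle$.
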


Here, $B^{\rm t}$ denotes the transpose of the matrix $B$ in the basis where $\{S_{x}^{3}\}$ is diagonal. Choosing $B = \Id$, we get the formula for the one-point function (it is relevant for truncated correlation functions):
\be
\langle A_{x} \rangle = \tfrac1{2S+1} \tr A.
\ee
An interesting special case of correlation function is $S=\frac12$ and $A=B=S^{i}$. We find that
\be
\langle S_{x}^{i} S_{y}^{i} \rangle = \begin{cases} \tfrac14 \bbP(E_{x,y}) & \text{if } i=1,3, \\ \tfrac14 [ \bbP(E_{x,y}^{+}) - \bbP(E_{x,y}^{-})] & \text{if } i=2. \end{cases}
\ee

\begin{proof}[Proof of Theorem \ref{thm corr}]
We use Theorem \ref{thm Gibbs} and space-time spin configurations, and we get
\be
\Tr A_{x} B_{y} \e{-\beta H_{\Lambda,0}^{(u)}} = \int\rho(\dd\omega) \sum_{\sigma \in \Sigma(\omega)} \langle \sigma_{\boldsymbol\cdot,0}| A_{x} B_{y} |\sigma_{\boldsymbol\cdot,\beta} \rangle.
\ee
Next we decompose
\be
\int {\boldsymbol\cdot} \, \rho(\dd\omega) = \int_{E_{x,y}^{+}} {\boldsymbol\cdot} \, \rho(\dd\omega) + \int_{E_{x,y}^{-}} {\boldsymbol\cdot} \, \rho(\dd\omega) + \int_{E_{x,y}^{\rm c}} {\boldsymbol\cdot} \, \rho(\dd\omega)
\ee
and we treat each case separately. If $\omega \in E_{x,y}^{+}$, we find
\be
\sum_{\sigma \in \Sigma(\omega)} \langle \sigma_{\boldsymbol\cdot,0}| A_{x} B_{y} |\sigma_{\boldsymbol\cdot,\beta} \rangle = (2S+1)^{|\caL(\omega)|-1} \sum_{a,b=-S}^{S} \langle a,b | A_{x} B_{y} |b,a\rangle.
\ee
The term $(2S+1)^{|\caL(\omega)|-1}$ is due to the sum of spin configurations on all the loops except the one that contains $x$ and $y$. The sum over $a,b$ represents the possible values of spins along this loop. Now we have
\be
\langle a,b | A_{x} B_{y} |b,a\rangle = \langle a | A | b \rangle \langle b | B | a \rangle,
\ee
and the sum over $a,b$ gives $\tr AB$. The case where $\omega \in E_{x,y}^{-}$ is similar, but the matrix elements involving $A,B$ are
\be
\langle a,a | A_{x} B_{y} |b,b\rangle = \langle a | A | b \rangle \langle a | B | b \rangle.
\ee
The sum over $a,b$ gives $\tr AB^{\rm t}$. Finally, the case $\omega \in E_{x,y}^{\rm c}$ involves two special loops, those containing $x$ and $y$, and we get
\be
\sum_{\sigma \in \Sigma(\omega)} \langle \sigma_{\boldsymbol\cdot,0}| A_{x} B_{y} |\sigma_{\boldsymbol\cdot,\beta} \rangle = (2S+1)^{|\caL(\omega)|-2} \tr A \; \tr B.
\ee
\end{proof}

The case of the Hamiltonian $\tilde H^{(u)}_{\Lambda,0}$ is more complicated due to the signs. They lead to signed measures when $S$ is half-integer but not integer (except when $u=0$ on a bipartite lattice). We restrict here to integer $S$ and we consider the two-point function
\be
\langle A_{x} B_{y} \rangle^{\sim} = \frac1{\tilde Z^{(u)}(\beta,\Lambda,0)} \Tr A_{x} B_{y} \e{-\beta \tilde H_{\Lambda,0}^{(u)}}.
\ee
We also write
\be
\tilde \bbP(E_{x,y}) = \frac1{\tilde Z^{(u)}(\beta,\Lambda,0)} \int 1_{E_{x,y}}(\omega) (2S+1)^{|\caL(\omega)|} \, \rho(\dd\omega).
\ee

\begin{theorem}
\label{thm integer spin}
For $x\neq y$, the correlation above is given by
\[
\begin{split}
\langle A_{x} B_{y} \rangle^{\sim} = &\tfrac1{2S+1} (\tr AB) \; \tilde\bbP(E_{x,y}^{+}) + \tfrac1{2S+1} \biggl( \sum_{a,b=-S}^{S} (-1)^{a+b} \langle a | A | b \rangle \langle -a | B | -b \rangle \biggr) \; \tilde\bbP(E_{x,y}^{-}) \\
&+ \tfrac1{(2S+1)^{2}} (\tr A) (\tr B) \; \tilde\bbP(E_{x,y}^{\rm c}).
\end{split}
\]
\end{theorem}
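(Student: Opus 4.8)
The plan is to run the argument of the proof of Theorem~\ref{thm corr}, carrying along the extra signs produced by the operators $P_{xy}$. Using the $\tilde H$-version of Theorem~\ref{thm Gibbs} (double bars now carry $P_{xy}$) and inserting $\Id=\sum_\sigma|\sigma\rangle\langle\sigma|$ to the left of each transition, one obtains
\[
\Tr A_{x} B_{y} \e{-\beta \tilde H_{\Lambda,0}^{(u)}} = \int\rho(\dd\omega) \sum_{\sigma \in \tilde\Sigma(\omega)} \eps(\sigma,\omega)\, \langle \sigma_{\boldsymbol\cdot,0}| A_{x} B_{y} |\sigma_{\boldsymbol\cdot,\beta} \rangle ,
\]
where $\tilde\Sigma(\omega)$ is the set of space-time spin configurations compatible with $\omega$ under the rule that the two endpoints of a double bar carry opposite spin values --- so that, along each loop, the spin value is constant up to a sign flip at every ``turn'' (double bar) --- but \emph{without} imposing $\sigma_{\boldsymbol\cdot,0}=\sigma_{\boldsymbol\cdot,\beta}$ at $x$ and $y$, and $\eps(\sigma,\omega)=\prod_{\text{double bars}}(-1)^{\sigma_{x,t-}+\sigma_{x,t+}}$ is the product over all double bars of the sign coming from the matrix elements of $P_{xy}$ (the exponent may be written with a $+$ since $S$ is an integer). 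For $h\equiv 0$ we also recall that $\tilde Z^{(u)}(\beta,\Lambda,0)=\int (2S+1)^{|\caL(\omega)|}\rho(\dd\omega)$ by Theorem~\ref{thm Z tilde}, which is precisely the normalisation appearing in $\tilde\bbP$.

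Next I would decompose $\int\rho(\dd\omega)=\int_{E_{x,y}^{+}}+\int_{E_{x,y}^{-}}+\int_{E_{x,y}^{\rm c}}$ and treat the three cases as in Theorem~\ref{thm corr}, keeping track of $\eps$. On the loop(s) through $x$ and $y$ a compatible configuration is obtained by cutting the loop(s) at $(x,0)$ (and $(y,0)$) into arcs; each arc carries one free value in $\{-S,\dots,S\}$, flipped in sign at each of the turns it contains, and every complete loop contains an even number of turns. For $\omega\in E_{x,y}^{\rm c}$, $x$ and $y$ belong to two distinct loops, each cut at one point; the even turn count forces $\sigma_{x,\beta}=\sigma_{x,0}$ and $\sigma_{y,\beta}=\sigma_{y,0}$, so only diagonal elements survive, the signs cancel exactly as in Theorem~\ref{thm Z tilde}, and the contribution is $(2S+1)^{|\caL(\omega)|-2}(\tr A)(\tr B)$. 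For $\omega\in E_{x,y}^{+}$, ``same vertical direction at $x,y$'' means that each of the two arcs contains an \emph{even} number of turns; the surviving $P$-signs cancel (see below), the matrix element collapses to $\langle a|A|b\rangle\langle b|B|a\rangle$ summed over the two arc values $a,b$, and the contribution is $(2S+1)^{|\caL(\omega)|-1}\tr(AB)$. For $\omega\in E_{x,y}^{-}$, ``opposite direction'' means each arc contains an \emph{odd} number of turns; hence $\sigma_{y,0}=-\sigma_{x,0}=:-a$ and $\sigma_{y,\beta}=-\sigma_{x,\beta}=:-b$, the matrix element is $\langle a|A|b\rangle\langle -a|B|-b\rangle$, the surviving sign is $(-1)^{a+b}$, and the contribution is $(2S+1)^{|\caL(\omega)|-1}\sum_{a,b}(-1)^{a+b}\langle a|A|b\rangle\langle -a|B|-b\rangle$. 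Dividing by $\tilde Z^{(u)}(\beta,\Lambda,0)$ produces the three terms of the statement.

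The arc combinatorics is routine (it is the same as in Theorem~\ref{thm corr}); the point that requires care is the sign count, i.e.\ showing that $\prod_{\text{double bars}}(-1)^{\sigma_{x,t-}+\sigma_{x,t+}}$ equals $1$ on $E_{x,y}^{\rm c}$ and $E_{x,y}^{+}$ but $(-1)^{a+b}$ on $E_{x,y}^{-}$. I would argue relatively. Fix $\omega$ and compare the configuration being summed with a ``reference'' configuration that is periodic on every loop and in which the loop through $x$ (and $y$) carries a single value $a$, consistently flipped at its turns; by the computation behind Theorem~\ref{thm Z tilde} the reference configuration has $\eps=1$. The summed configuration differs from it only on the second arc, whose value is changed from $\pm a$ to $\pm b$; since $|a|\equiv a$ and $|b|\equiv b\pmod 2$, every double bar \emph{internal} to a single arc keeps its sign, while every double bar \emph{joining} the two arcs has its sign multiplied by $(-1)^{a+b}$. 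Hence $\eps=(-1)^{(a+b)N}$, where $N$ is the number of double bars that join the two arcs. Finally, $N$ has the same parity as the number $k$ of turns contained in one of the two arcs (each joining double bar supplies exactly one turn to that arc, each internal one supplies zero or two), and $k$ is even precisely when the loop traversed along that arc returns to the same vertical direction at $y$ as at $x$ --- which is exactly the distinction between $E_{x,y}^{+}$ and $E_{x,y}^{-}$; and each of the two loops arising in $E_{x,y}^{\rm c}$ is cut at a single point and keeps an even turn count, so $\eps=1$ there. This completes the plan.
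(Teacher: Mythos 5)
Your proposal is correct and follows exactly the route the paper indicates: the paper does not write this proof out, saying only that it parallels Theorem~\ref{thm corr} with extra bookkeeping of the $P_{xy}$ signs, and your argument supplies precisely that bookkeeping --- the reduction of $\eps$ to $(-1)^{(a+b)k}$ with $k$ the number of turns on one arc, whose parity distinguishes $E_{x,y}^{+}$ from $E_{x,y}^{-}$ --- and it checks out against the displayed special cases $\langle S_{x}^{i}S_{y}^{i}\rangle$. One small imprecision: a double bar may have one turn on the modified arc and its other turn on an entirely \emph{different} loop (so it is neither ``internal to a single arc'' nor ``joining the two arcs''), but such a bar also acquires the factor $(-1)^{a+b}$ and is captured by the same parity count (bars contributing exactly one turn to the modified arc have the same parity as $k$), so the conclusion stands.
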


The formula for one-point functions follow, $\langle A_{x} \rangle^{\sim} = \frac1{2S+1} \tr A$.
The proof is similar to that of Theorem \ref{thm corr}, but there are extra difficulties due to the minus signs. We do not write it explicitly. We find in particular (see \cite{Uel} for more details)
\be
\begin{split}
& \langle S_{x}^{i} S_{y}^{i} \rangle = \tfrac13 S(S+1) \, \bigl[ \bbP(E_{x,y}^{+}) - \bbP(E_{x,y}^{-}) \bigr] \\
& \langle (S_{x}^{i})^{2} (S_{y}^{i})^{2} \rangle - \langle (S_{x}^{i})^{2} \rangle \langle (S_{y}^{i})^{2} \rangle = \tfrac1{45} S(S+1)(2S-1)(2S+3) \, \bbP(E_{x,y}).
\end{split}
\ee

It is remarkable that many spin correlation functions can be expressed with a handful of loop correlation functions.

\bigskip
\noindent
{\small
{\bf Acknowledgments:}
I am grateful to Benjamin Lees for useful discussions. I would like to thank Pavel Exner, Wolfgang K\"onig, and Hagen Neidhardt, for organizing the conference QMATH 12 in Berlin, 10--13 September 2013, where this work was presented.
}

{
\renewcommand{\refname}{\small References}
\bibliographystyle{symposium}

}

\end{document}